\makeatletter
\@namedef{ver@e-jc.sty}{2023/07/03 replaced by tbpreprint}
\@namedef{opt@e-jc.sty}{amsmath}
\@namedef{ver@lmodern.sty}{2009/10/30 replaced by libertinus}
\@namedef{ver@amssymb.sty}{2013/01/14 replaced by libertinust1math}
\let\tbkit@documentclass\documentclass
\renewcommand\documentclass[2][]{\tbkit@documentclass{tbpreprint}}
\AddToHook{class/tbpreprint/after}{%
  \def\copy@right{}%
  \def\the@dateline{}%
  \newcommand\Copyright[1]{\def\copy@right{\copyright\thinspace\ignorespaces#1}}%
  \newcommand\dateline[3]{}%
  \newcommand\MSC[1]{\msc{#1}}%
  \long\def\authortext#1#2{\tbkit@partir#2ORCID \tbkit@fin.\@nil}%
  \tbtipo{}%
  \tbnota{\the@dateline}%
  \RequirePackage{tbdiagram}%
}
\def\doi{\begingroup\catcode`\_=12\relax\doi@arg}
\def\doi@arg#1{\href{https://doi.org/#1}{\texttt{doi:#1}}\endgroup}

\def\tbkit@fin{\tbkit@fin}
\long\def\tbkit@partir#1ORCID #2.#3\@nil{%
  \affiliation{#1}%
  \def\tbkit@orcid{#2}%
  \ifx\tbkit@orcid\tbkit@fin\let\tbkit@orcid\@empty\fi}
\let\tbkit@orcid\@empty
\def\tbkit@primeralinea#1\\#2\@nil{#1}
\AtBeginDocument{%
  \ifx\tbkit@orcid\@empty\else\protected@edef\@author{\@author\noexpand\orcid{\tbkit@orcid}}\fi
  \shorttitle{\expandafter\tbkit@primeralinea\@title\\\@nil}%
}
\def\tbkit@sindospuntos#1:#2\@nil{#1}
\AddToHook{env/abstract/begin}{\renewcommand\textbf[1]{\tbetiqueta{\tbkit@sindospuntos#1:\@nil}\quad
  \ignorespaces}\def\bigskip{\vspace{1.2ex}}}
\def\tbkit@antes#1#2{\global\@namedef{tbkit@antes@#1}{#2}}
\def\tbkit@dispara#1{\@ifundefined{tbkit@antes@#1}{}{%
  \@nameuse{tbkit@antes@#1}\global\expandafter\let\csname tbkit@antes@#1\endcsname\relax}}
\AddToHook{cmd/section/before}{\tbkit@dispara{s\the\numexpr\value{section}+1\relax}}
\makeatother

\makeatletter
\tbkit@antes{s3}{% The link decomposition of one cell as a chain of statements, for the note. Adapted from
\begin{figure}[tbp]
\centering
\begin{tikzpicture}[tb diagrama, x=1mm, y=1mm,
    caja/.style={text width=26mm, font=\sffamily\small}, fila/.style={minimum height=22mm}]
  \node[tb datos, caja, fila] (cov) at (0,0)
    {a $(v,k,t)$-covering with $L$ blocks\\[1pt] {\footnotesize (assumed, to be refuted)}};
  \node[tb proceso, caja, fila] (lem) at (36.5,0)
    {step (i) UNSAT\\[1pt] {\footnotesize no $(v{-}1,k{-}1,t{-}1)$-\allowbreak covering with $r-1$ blocks, so every degree is $\ge r$}};
  \node[tb neutro, caja, fila] (deg) at (73,0)
    {$(r+1)v>kL$\\[1pt] {\footnotesize so some point $p$ has degree exactly $r$}};
  \node[tb datos, caja, fila] (link) at (109.5,0)
    {the link of $p$\\[1pt] {\footnotesize a $(v{-}1,k{-}1,t{-}1)$-\allowbreak covering with $r$ blocks}};
  \draw[tb flecha] (cov) to (lem);
  \draw[tb flecha] (lem) to (deg);
  \draw[tb flecha] (deg) to (link);

  \node[tb datos, caja, fila] (list) at (0,-40)
    {class list $c_1,\dots,c_N$\\[1pt] {\footnotesize from any source, $N$ from \clasesMinNota{} to \clasesMaxNota{} per cell}};
  \node[tb proceso, caja, fila] (flist) at (36.5,-40)
    {$\Flista$ UNSAT\\[1pt] {\footnotesize the link is isomorphic to $M_{c_i}$ for some $i$}};
  \begin{scope}[ext/.style={tb proceso, text width=22mm, minimum height=6mm, inner ysep=2pt,
      font=\sffamily\footnotesize}]
    \node[ext] (e1) at (73,-29) {$\Fclase(c_1)$ UNSAT};
    \node[ext] (e2) at (73,-37.5) {$\Fclase(c_2)$ UNSAT};
    \node[font=\sffamily\small, text=tbgristexto] (edots) at (73,-44.5) {$\vdots$};
    \node[ext] (eN) at (73,-52) {$\Fclase(c_N)$ UNSAT};
  \end{scope}
  \node[tb terminal, text width=24mm] (fin) at (109.5,-40)
    {no covering with $L$ blocks\\[1pt] $C(v,k,t)\ge L+1$};
  \draw[tb flecha] (list) to (flist);
  \draw[tb flecha] (link.south) to ++(0,-8.5) -| ([xshift=4mm]flist.north);
  \draw[tb flecha] (flist.east) to (e1.west);
  \draw[tb flecha] (flist.east) to (e2.west);
  \draw[tb flecha] (flist.east) to (eN.west);
  \draw[tb flecha] (e1.east) to (fin.west);
  \draw[tb flecha] (e2.east) to (fin.west);
  \draw[tb flecha] (eN.east) to (fin.west);
  \node[tb anotacion, align=center, text width=34mm, below=1.5mm of eN]
    {one formula per class, or per group of up to \tamGrupo{} classes};
  \begin{scope}[on background layer]
    \node[tb grupo=for every class, fit=(e1)(eN), inner sep=3pt] {};
  \end{scope}
\end{tikzpicture}
\caption{The link decomposition of one cell as a chain of statements
(Theorem~\ref{thm:reduction}). Slate boxes are combinatorial objects, light amber boxes are formulas
that are refuted with proofs checked by cake\_lpr, the white box is arithmetic and the solid amber
box is the conclusion. The list of classes may come from any source: a missing class would make
$\Flista$ satisfiable.}\label{fig:kit-decomposition}
\end{figure}
}
\makeatother

\documentclass[12pt]{article}
\usepackage[amsmath]{e-jc}

\usepackage[T1]{fontenc}
\usepackage{lmodern}
\usepackage{amssymb}
\usepackage{booktabs}
\usepackage{array}
\usepackage{xcolor}
\usepackage{microtype}
\usepackage{xurl}

\newcommand{\lex}{\preceq}
\newcommand{\Flista}{\Phi_{\mathrm{list}}}
\newcommand{\Fclase}{\Phi_{\mathrm{ext}}}
\newcommand{\sort}{\operatorname{sort}}
\newcommand{\Img}{\operatorname{Im}}

\newcommand{\zdoi}[1]{\href{https://doi.org/#1}{doi:#1}}
\newcommand{\nHuecoUno}{443}
\newcommand{\nAplica}{433}
\newcommand{\nNoExacto}{7}
\newcommand{\nFalla}{3}

\newcommand{\nTdos}{282}

\newcommand{\rLRseisMin}{0.46}
\newcommand{\rLRseisMax}{0.60}

\newcommand{\nTtresRcinco}{12}

\newcommand{\tamGrupo}{100}

\newcommand{\nAgrupadas}{6}
\newcommand{\vAgrupadasMin}{30}

\newcommand{\depositoGib}{143}

\newcommand{\nCeldas}{25}
\newcommand{\nCeldasPalabra}{twenty-five}
\newcommand{\nCeldasLarga}{16}
\newcommand{\nCeldasNuevas}{9}
\newcommand{\valoresTexto}{9, 10 and 11}
\newcommand{\tConjunto}{$t\in\{3,4,5\}$}
\newcommand{\fechaRepositorio}{20 October 2026}
\newcommand{\nSchonheimIgual}{15}
\newcommand{\nSchonheimMas}{10}
\newcommand{\fraseHorsleySingh}{none of the twenty-five cells appears in Tables~1 and~2}
\newcommand{\rLMinNota}{0.50}
\newcommand{\libresMaxNota}{4}
\newcommand{\clasesMinNota}{1}
\newcommand{\clasesMaxNota}{9{,}375}

\newcommand{\nFormulasCod}{2{,}370}
\newcommand{\nFormulasTotal}{4{,}740}

\newcommand{\nListasControl}{20}

\newcommand{\gibNuevas}{5.4}
\newcommand{\nRejilla}{1{,}621}
\newcommand{\frasePendientes}{ We also attempted seven cells with $t=3$, $r\in\{6,7\}$ and 2 to 38 classes. In each of them an extension formula produced a DRAT proof of more than 3.5\,GiB before the solver finished, more than drat-trim can check in the memory we gave it, and we stopped.}
\newcommand{\nRseisNota}{32}
\newcommand{\nRseisEnum}{21}
\newcommand{\rseisClasesMin}{1}
\newcommand{\rseisClasesMax}{1{,}002{,}213}
\newcommand{\nRseisTabla}{6}

\dateline{Oct 21, 2026}{TBD}{TBD}
\ifdefined\ejcmode\else
  \makeatletter
  \renewcommand{\ps@plain}{\renewcommand{\@oddfoot}{\hfil\thepage\hfil}\renewcommand{\@evenfoot}{\hfil\thepage\hfil}}
  \AtBeginDocument{\def\the@dateline{Preprint, 21 October 2026\\[0.5ex]\copy@right}}
  \makeatother
\fi
\MSC{05B40, 05B30, 68V05, 68V15}
\Copyright{The author. Released under the CC BY licence (International 4.0).}

\title{Links of minimum degree\\ and exact covering numbers}

\author{Alfonso Gippini Requeijo}
\authortext{}{Team Banzai S.L.U., Madrid, Spain (\email{alfonso@team-banzai.com}). ORCID 0009-0004-2109-0583.}

\begin{document}

\maketitle

\begin{abstract}
The covering number $C(v,k,t)$ is the least number of $k$-subsets, called blocks, of a $v$-set
such that every $t$-subset lies in one of them. We reduce the question whether a covering with $L$
blocks exists to the classification of links. If every point of such a covering lies in at least
$r$ blocks and $(r+1)v>kL$, some point lies in exactly $r$ blocks, and its link, the blocks through
it with the point removed, is a $(v-1,k-1,t-1)$-covering with $r$ blocks. It then suffices to list
these links up to isomorphism, to prove the list complete, and to show that no link in the list
extends to a covering with $L$ blocks. Completeness is certified by one unsatisfiable formula that
excludes the listed classes, so the program that lists the links need not be trusted. In general
the last two steps are decided by propositional satisfiability (SAT) solving, with every formula
written by two separately implemented encoders and every refutation checked by a formally verified
proof checker. For links with four blocks and $t=3$ we give the classification by hand, under a
condition on the parameters. As an application we determine \nCeldasPalabra{} covering numbers with
\tConjunto{} in which the published lower bound was one less than the size of a known covering,
among them $C(39,24,3)=9$.

\bigskip\noindent \textbf{Keywords:} covering designs, covering numbers, classification of
designs, SAT solving, verified proof checking
\end{abstract}

% ---------------------------------------------------------------------------------------------
\section{Introduction}\label{sec:intro}

A $(v,k,t)$-covering is a family $\mathcal{B}$ of $k$-subsets of a $v$-set $V$, called blocks,
such that every $t$-subset of $V$ is contained in at least one block. Blocks may repeat. The
covering number $C(v,k,t)$ is the least size of a $(v,k,t)$-covering. The classical lower bound is
due to Sch\"onheim~\cite{schonheim1964}:
\begin{equation}\label{eq:schonheim}
  C(v,k,t) \;\ge\; \left\lceil \frac{v}{k}\, C(v-1,k-1,t-1) \right\rceil .
\end{equation}
It counts incidences: the blocks through a point, with that point removed, form a
$(v-1,k-1,t-1)$-covering. Exact values and the best known bounds are collected in the La Jolla
Covering Repository of Gordon, Kuperberg and Patashnik~\cite{gkp1995}, now frozen as a dataset on
Zenodo~\cite{ljcr2026}, and in its successor, the Covering Repository~\cite{coveringrepository}. We
call a parameter set $(v,k,t)$ with its bounds a \emph{cell}.

\paragraph{The method.} Let $r$ be the least size of a $(v-1,k-1,t-1)$-covering. Every point of a
$(v,k,t)$-covering with $L$ blocks lies in at least $r$ blocks. The degrees of the $v$ points add
up to $kL$, so if $(r+1)v>kL$ some point $p$ lies in exactly $r$ blocks. Its link, the blocks
through $p$ with $p$ removed, is then a $(v-1,k-1,t-1)$-covering with exactly $r$ blocks, a link
of minimum degree. It fixes $r$ of the $L$ columns of the incidence matrix, and the question becomes whether one of
finitely many links, listed up to isomorphism, extends by $L-r$ further blocks.
Section~\ref{sec:method} makes this precise, including how to prove a list of links complete
without trusting the program that produced it. Section~\ref{sec:example} classifies by hand the
links with four blocks for $t=3$ under a condition on the parameters. The condition $(r+1)v>kL$ is
weaker than $kL=vr$, under which every point has degree $r$, and holds whenever $L$ is the
Sch\"onheim value (Section~\ref{sec:method}). Compared with the closest computer proofs (see
Related work), the method uses this weaker condition and certifies separately that the list of
links is complete.

\paragraph{The cells.} As of \fechaRepositorio, the date of every statement below about the two
repositories, the Covering Repository listed each of the \nCeldas{} cells of Table~\ref{tab:main}
with a covering of $L+1$ blocks and a lower bound of $L$.

\begin{theorem}\label{thm:main}
For each cell $(v,k,t)$ of Table~\ref{tab:main}, with $L$ as given there, $C(v,k,t)=L+1$.
\end{theorem}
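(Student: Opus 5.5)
Since each cell of Table~\ref{tab:main} already has a covering with $L+1$ blocks in the Covering Repository, $C(v,k,t)\le L+1$ holds by exhibiting these coverings; I would verify each one directly by checking that every $t$-subset lies in some block. The content of Theorem~\ref{thm:main} is therefore the lower bound $C(v,k,t)\ge L+1$, that is, no $(v,k,t)$-covering with $L$ blocks exists. The plan is to apply the link decomposition of Section~\ref{sec:method} (Theorem~\ref{thm:reduction}) cell by cell.

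For a fixed cell I first determine $r=C(v-1,k-1,t-1)$. I need only the lower bound $r$, certified by refuting the formula asserting a $(v-1,k-1,t-1)$-covering with $r-1$ blocks. From it, every point of a hypothetical $L$-block covering has degree at least $r$. Then I check the arithmetic condition $(r+1)v>kL$. The introduction notes that it holds whenever $L$ is the Sch\"onheim value, and for the listed cells $L$ is the published lower bound, which I expect is exactly the Sch\"onheim value or close to it. Under this condition some point $p$ has degree exactly $r$, and its link is a $(v-1,k-1,t-1)$-covering with exactly $r$ blocks.

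Next, I classify these minimum-degree links up to isomorphism. For $t=3$ with $r=4$ I would use the hand classification of Section~\ref{sec:example}, after checking that its parameter condition holds. For the remaining cells I would enumerate the classes $c_1,\dots,c_N$ by any program. Completeness of the list is then certified by refuting $\Flista$, which forbids each listed class up to isomorphism, so the enumerator itself need not be trusted. Finally, for each class $c_i$ (or each group of up to \tamGrupo{} classes), I fix the link $M_{c_i}$ as the first $r$ columns of the incidence matrix through the point $p$. I then refute $\Fclase(c_i)$, which asserts that these columns extend by $L-r$ further blocks, all avoiding $p$, to a $(v,k,t)$-covering. Every formula is produced by the two independent encoders, and every refutation is checked by cake\_lpr. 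Together, the refutations contradict the assumed $L$-block covering, which gives $C(v,k,t)\ge L+1$.

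I expect the main obstacle to be computational: the extension formulas $\Fclase(c_i)$ for large $v$ or large $r$ with many classes. There the DRAT or LPR proofs may grow too large to check, as the remark on the abandoned $r\in\{6,7\}$ cells already shows. My first remedy would be symmetry breaking inside each extension formula using the stabiliser of the link, namely lexicographic ordering (via $\lex$) of the remaining $L-r$ blocks and of the points outside $p$ that are equivalent under the link's automorphism group. If that is not enough, I would group classes to share the learned structure, or split the formula on the degree of a second point.
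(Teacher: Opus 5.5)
Your proposal is correct and is essentially the paper's proof. The known coverings, checked on every $t$-subset, give the upper bound. Theorem~\ref{thm:reduction} gives the lower bound: (ii) is read from Table~\ref{tab:main}, (i) and (iv) are certified refutations (one $\Fclase(c)$ per class, or one $\Fclase(I)$ per group in six cells), and (iii) comes from $\Flista$, or from Lemma~\ref{lem:four} for the three cells with $r=4$.

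One caution about your remedy paragraph. Lemma~\ref{lem:lexleast} justifies lex ordering only of rows with the same link type $f_q$ and of the free columns, which is what conditions (f) and (g) of Definition~\ref{def:ext} impose. Points that are merely in one orbit of the link's automorphism group, with different types, do not qualify, so a pairwise ordering of their free parts is not sound in general and would need lex-leader constraints of the form $X\lex gX$.
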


The covering numbers obtained are \valoresTexto. The upper bounds are the known coverings, which
we read from the La Jolla dataset and checked by a program that tests every $t$-subset. The
content of the theorem is that no covering with $L$ blocks exists. In \nSchonheimIgual{} of the
cells the recorded lower bound is the Sch\"onheim bound~\eqref{eq:schonheim}, and in the other
\nSchonheimMas{} it is one more. In all of them the link fixes a fraction $r/L\ge\rLMinNota$ of the
columns and $L-r\le\libresMaxNota$, and the classification by hand of Section~\ref{sec:example}
covers three of them, among them $C(39,24,3)$, the example of that section. The
bounds of Todorov~\cite{todorov1989} concern $t=2$, and \fraseHorsleySingh{} of Horsley and
Singh~\cite{horsleysingh2018}, which list the $(v,k)$ with $k\le 40$ and $t\in\{3,4,5\}$ where
their bounds improve on the best bound known to them. To our knowledge none of these values had
been established before, except $C(18,11,3)$ and $C(20,12,3)$, which we settled in an earlier data
deposit~\cite{zenodo2cells} by a split on degree sequences and which the repositories do not yet
record.

\paragraph{Related work.} Two recent computer-assisted proofs of $C(12,6,4)=41$ also fix a link.
In Krug's proof~\cite{krug2026}, $kL=vr$ with $L=40$ and $r=20$, so every point of a hypothetical
$40$-block covering has degree exactly $20$, every link is an optimal $(11,5,3)$-covering, and
this covering is unique up to isomorphism (Krug's Section~2 and Proposition~15). The proof package
of Bertram~\cite{bertram2026} fixes the nine blocks through a pair of points and enumerates the
orbits of such links. In our cells $vr<kL<v(r+1)$, so at least $(r+1)v-kL$ points, but not all,
have degree $r$, and there are up to $\clasesMaxNota$ classes of links. What has to be added is a way to make
the classification of links trustworthy. Blocking every model found and ending with an
unsatisfiable formula is the usual way to enumerate all models with a SAT
solver~\cite{mcmillan2002,todasoh2016}. We use it once, to certify a list produced outside the
solver.

% ---------------------------------------------------------------------------------------------
\section{Preliminaries}\label{sec:prelim}

Throughout, $2\le t\le k<v$. A covering with $L$ blocks is given by its incidence matrix
$Y\in\{0,1\}^{v\times L}$, with $Y_{p,j}=1$ when point $p$ lies in block $j$. So $Y$ is the
incidence matrix of a $(v,k,t)$-covering if and only if every column has exactly $k$ ones and
every $t$ rows have a common column of ones. Two matrices are isomorphic if one is obtained from
the other by permuting rows and permuting columns. The degree $d_p$ of a point is the number of
ones in row $p$.

For binary vectors $a,b$ of the same length, $a\lex b$ means that $a=b$ or $a_i<b_i$ at the first
index $i$ where they differ. Matrices of the same shape are compared by $\lex$ on their row
readings, the concatenation of their rows, row $0$ first.

\begin{lemma}\label{lem:monotone}
If a $(v,k,t)$-covering with $L'\le L$ blocks exists, one with exactly $L$ blocks exists.
\end{lemma}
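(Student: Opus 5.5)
The plan is to pad the smaller covering with extra blocks, using the paper's convention that blocks may repeat. Let $\mathcal{B}$ be a $(v,k,t)$-covering with $L'\le L$ blocks. Choose any $k$-subset $B_0$ of $V$; one exists because $k<v$. Form $\mathcal{B}'$ by adding $L-L'$ copies of $B_0$ to $\mathcal{B}$. Each added set is a $k$-subset, so $\mathcal{B}'$ is a family of $L$ blocks. Adding blocks cannot uncover a $t$-subset, so every $t$-subset still lies in some block of $\mathcal{B}\subseteq\mathcal{B}'$.

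In matrix language, this amounts to appending $L-L'$ columns to the incidence matrix $Y$. Each new column has exactly $k$ ones, for instance in rows $0,\dots,k-1$. Every column of the new matrix then still has exactly $k$ ones, and any $t$ rows keep their common column of ones from $Y$. This is exactly the characterization of incidence matrices given in Section~\ref{sec:prelim}.

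There is essentially no obstacle. The only point to check is that repeated blocks are allowed. If one insisted on distinct blocks, one would instead add $k$-subsets not already in $\mathcal{B}$. That needs $L\le\binom{v}{k}$, which is not guaranteed in general, so the argument relies on the convention that blocks may repeat.
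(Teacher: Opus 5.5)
Your proof is correct and is the paper's argument. The paper pads the given covering by repeating one of its own blocks $L-L'$ times, using the convention that blocks may repeat; padding with an arbitrary $k$-subset instead makes no real difference, and it avoids needing $L'\ge 1$, which holds anyway because some $t$-subset must be covered.
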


\begin{proof}
Repeat one block $L-L'$ times.
\end{proof}

\begin{lemma}\label{lem:link}
Let $Y$ be a $(v,k,t)$-covering and $p$ a point. Then $d_p\ge 1$, and deleting row $p$ and the
columns of the blocks that miss $p$ leaves the incidence matrix of a $(v-1,k-1,t-1)$-covering with
$d_p$ blocks, the \emph{link} of $p$.
\end{lemma}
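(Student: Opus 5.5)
The plan is to verify both claims of Lemma~\ref{lem:link} directly from the definition of a covering. The only inputs are the standing assumption $2\le t\le k<v$ and the description of a covering by its incidence matrix $Y$.

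For $d_p\ge 1$: since $t\le v$ and $t\ge 1$, we can choose a $t$-subset $T$ of $V$ containing $p$. Some block contains $T$, and hence contains $p$, so row $p$ of $Y$ has at least one $1$.

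For the link, let $J$ be the set of the $d_p$ columns $j$ with $Y_{p,j}=1$, and let $Y'$ be the $(v-1)\times d_p$ submatrix with rows $V\setminus\{p\}$ and columns $J$. I will check the two defining conditions in turn.
\begin{itemize}
\item Each column $j\in J$ has exactly $k$ ones in $Y$, one of them in row $p$. So in $Y'$ it has exactly $k-1$ ones, i.e.\ it represents the $(k-1)$-subset $B_j\setminus\{p\}$ of $V\setminus\{p\}$.
\item Let $S$ be any $(t-1)$-subset of $V\setminus\{p\}$. The set $S\cup\{p\}$ is a $t$-subset of $V$, so some block $B_j$ contains it. That block contains $p$, so $j\in J$, and $B_j\setminus\{p\}\supseteq S$. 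Hence the $t-1$ rows of $S$ have a common column of ones in $Y'$.
\end{itemize}

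It remains to check the parameter range. Since $t\ge 2$ we have $t-1\ge 1$, and $t-1\le k-1<v-1$, so $Y'$ is the incidence matrix of a $(v-1,k-1,t-1)$-covering with $d_p$ blocks. Repeated blocks cause no difficulty, because they are allowed.

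There is no real obstacle here. The only point needing care is bookkeeping: a block through $p$ loses exactly one element when $p$ is removed, which uses that blocks are sets, so $p$ occurs at most once in each.
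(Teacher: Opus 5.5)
Your proof is correct and follows essentially the same argument as the paper: a $t$-set through $p$ exists because $t\le v$, giving $d_p\ge 1$, and each $(t-1)$-set $S$ avoiding $p$ is covered via the block containing $S\cup\{p\}$. You also verify explicitly that each retained column has exactly $k-1$ ones and that the parameters are in range, which the paper leaves implicit.
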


\begin{proof}
A $t$-set containing $p$ exists because $v\ge t$, and it lies in a block, so $d_p\ge 1$. A
$(t-1)$-set $S$ of points other than $p$ gives the $t$-set $S\cup\{p\}$, which lies in some block
$B$, and then $p\in B$ and $S\subseteq B\setminus\{p\}$.
\end{proof}

\begin{lemma}[Degree lemma]\label{lem:degree}
If there is no $(v-1,k-1,t-1)$-covering with $r-1$ blocks, every point of every
$(v,k,t)$-covering has degree at least $r$.
\end{lemma}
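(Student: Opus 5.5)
The plan is to argue by contradiction, using Lemma~\ref{lem:link} together with Lemma~\ref{lem:monotone} applied to the smaller parameters $(v-1,k-1,t-1)$. Suppose some $(v,k,t)$-covering $Y$ has a point $p$ with $d_p\le r-1$. By Lemma~\ref{lem:link}, the link of $p$ is a $(v-1,k-1,t-1)$-covering with $d_p$ blocks. Since $d_p\le r-1$, Lemma~\ref{lem:monotone} then gives a $(v-1,k-1,t-1)$-covering with exactly $r-1$ blocks, which contradicts the hypothesis.

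The only step that needs care is whether the earlier lemmas apply to the triple $(v-1,k-1,t-1)$, since the paper's standing assumption is $2\le t\le k<v$. Lemma~\ref{lem:monotone} uses no parameter condition at all: repeating a block preserves the covering property for any triple, so I would only note that it applies verbatim.

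If $t\ge 3$, the triple satisfies the standing assumption, because $2\le t-1\le k-1<v-1$.

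If $t=2$, we have $t-1=1$, and a $(v-1,k-1,1)$-covering is a family of $(k-1)$-sets covering all points; here too the repetition argument is unaffected.

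Lemma~\ref{lem:link} itself is applied to $Y$ with the original parameters, so it is valid in every case. I would also mention that $d_p\ge 1$ from Lemma~\ref{lem:link} guarantees the link is nonempty, so the monotonicity step starts from a genuine covering with $L'=d_p\ge 1$ blocks. I do not expect a real obstacle; the proof is two lines once these remarks are made.
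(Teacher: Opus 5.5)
Your proposal is correct and is the paper's argument: a point with $d_p\le r-1$ has as link a $(v-1,k-1,t-1)$-covering with $d_p$ blocks (Lemma~\ref{lem:link}), and Lemma~\ref{lem:monotone} pads it to $r-1$ blocks, contradicting the hypothesis. Your extra remarks are sound but not needed. Repeating a block works for any parameters, including $t-1=1$, and $d_p\ge 1$ guarantees there is a block to repeat.
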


\begin{proof}
If $d_p\le r-1$, the link of $p$ is a $(v-1,k-1,t-1)$-covering with $d_p$ blocks, and
Lemma~\ref{lem:monotone} turns it into one with $r-1$ blocks.
\end{proof}

With $r=C(v-1,k-1,t-1)$ and $\sum_p d_p=kL$, Lemma~\ref{lem:degree} gives~\eqref{eq:schonheim}.
The next lemma is the standard fact that row and column symmetries can be broken by lexicographic
order~\cite{doublelex2002}, in the form in which we use it.

\begin{lemma}[Lex-least representative]\label{lem:lexleast}
Let $X$ be a binary matrix, let its rows be partitioned into parts of consecutive indices, and let
$J$ be a set of consecutive column indices. Let $G$ be the group that permutes the rows inside each
part and permutes the columns in $J$, and let $X^*$ have the least row reading in the orbit $GX$.
Then \textup{(i)} if rows $q$ and $q+1$ lie in the same part, row $q$ of $X^*$ $\lex$ row
$q+1$, and \textup{(ii)} if $j,j+1\in J$, column $j$ of $X^*$ $\lex$ column $j+1$.
\end{lemma}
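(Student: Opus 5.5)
The plan is an exchange argument. We assume the conclusion fails for $X^*$ and build an element of the orbit $GX$ whose row reading is strictly smaller, contradicting the choice of $X^*$. Since $G$ is a group, $GX^*=GX$. So any matrix obtained from $X^*$ by a row swap inside a part, or by a column swap inside $J$, again lies in $GX$. It is therefore enough to show that one such transposition strictly decreases the row reading.

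For (i), suppose rows $q$ and $q+1$ lie in the same part and row $q$ of $X^*$ is not $\lex$ row $q+1$. Then the rows differ, and at their first differing index row $q$ has a $1$ where row $q+1$ has a $0$. Swap the two rows, which is an element of $G$ because both lie in the same part. The row reading of the result agrees with that of $X^*$ on rows $0,\dots,q-1$. At the start of row $q$ it now reads the old row $q+1$, and this is strictly smaller than the old row $q$ at their first difference. The row reading is the concatenation of the rows in order, so the new reading is strictly less than the old one, a contradiction.

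For (ii), suppose $j,j+1\in J$ and column $j$ of $X^*$ is not $\lex$ column $j+1$, where columns are read top to bottom. Let $i$ be the first row in which the two columns differ. Then $X^*_{i,j}=1$ and $X^*_{i,j+1}=0$, and the two columns agree in rows $0,\dots,i-1$. Apply the transposition of columns $j$ and $j+1$, which lies in $G$. Rows $0,\dots,i-1$ are unchanged, because in each of them the two swapped entries are equal. In row $i$, the entries before position $j$ are unchanged, and position $j$ goes from $1$ to $0$. So the first change in the row reading occurs at row $i$, position $j$, where it is a decrease. The new matrix is therefore strictly smaller, again a contradiction.

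The only point needing care is (ii). There one must check that the first position of the row reading affected by the column swap is exactly the entry $(i,j)$. This holds because earlier rows are fixed pointwise, and within row $i$ position $j$ precedes position $j+1$. The rest is bookkeeping about which transpositions $G$ contains, which is guaranteed by the hypothesis that the parts and $J$ consist of consecutive indices, so adjacent indices belong to the same block of the group.
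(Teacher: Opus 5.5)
Your proof is correct and follows the paper's own exchange argument. In (i) you swap rows $q$ and $q+1$, which fixes rows $0,\dots,q-1$ and makes row $q$ smaller; in (ii) you swap columns $j$ and $j+1$, which fixes rows $0,\dots,i-1$ and the first $j$ entries of row $i$ while turning entry $(i,j)$ from $1$ into $0$. In both cases an element of $G$ lowers the row reading. One small correction to your closing remark: the transpositions lie in $G$ directly by the hypotheses of (i) and (ii) (rows $q,q+1$ in the same part, and $j,j+1\in J$), not because the parts and $J$ consist of consecutive indices.
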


\begin{proof}
(i) Otherwise exchanging rows $q$ and $q+1$ leaves rows $0,\dots,q-1$ unchanged and makes row $q$
smaller. (ii) Otherwise let $i$ be the first row where columns $j$ and $j+1$ differ, so that
$X^*_{i,j}=1$ and $X^*_{i,j+1}=0$. Exchanging the two columns leaves rows $0,\dots,i-1$ unchanged,
and in row $i$ it keeps positions $0,\dots,j-1$ and turns position $j$ into $0$. In both cases
an element of $G$ decreases the row reading.
\end{proof}

With one part and $J$ equal to all columns, Lemma~\ref{lem:lexleast} shows that every binary
matrix is isomorphic to a \emph{doubly lex} matrix, one whose rows and columns are both
$\lex$-nondecreasing.

% ---------------------------------------------------------------------------------------------
\section{The link decomposition}\label{sec:method}

\begin{proposition}\label{prop:mindeg}
Let $Y$ be a $(v,k,t)$-covering with $L$ blocks in which every point has degree at least $r$. If
$(r+1)v>kL$, some point has degree exactly $r$.
\end{proposition}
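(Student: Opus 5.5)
The plan is a double-counting argument by contradiction, using only the incidence count already invoked after Lemma~\ref{lem:degree}. We first record that the degrees sum to the number of incidences. Each of the $L$ columns of $Y$ has exactly $k$ ones, so counting the ones of $Y$ by columns and then by rows gives
\[
  \sum_{p\in V} d_p \;=\; kL .
\]
Blocks may repeat, but this does not matter, because the count is over columns of the incidence matrix.

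Next, suppose for contradiction that no point has degree exactly $r$. By hypothesis every $d_p\ge r$, and degrees are integers, so every $d_p\ge r+1$. Summing over the $v$ points gives $kL=\sum_p d_p\ge (r+1)v$. This contradicts the assumption $(r+1)v>kL$, so some point $p$ has $d_p=r$.

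There is no real obstacle. The only points needing care are that degrees are integers, so that $d_p\ne r$ together with $d_p\ge r$ forces $d_p\ge r+1$, and that the identity $\sum_p d_p=kL$ uses only the column condition on $Y$, not the covering property. The argument in fact shows more: since $\sum_p (d_p-r)=kL-rv$ and every term is nonnegative, at least $v-(kL-rv)=(r+1)v-kL$ points have degree exactly $r$. This matches the count quoted in the introduction, and I would note it as a remark.
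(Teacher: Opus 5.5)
Your proof is correct and is the paper's own argument: if no degree equals $r$, integrality and the hypothesis force $d_p\ge r+1$ for every point, so $kL=\sum_p d_p\ge (r+1)v$, a contradiction. Your closing remark that at least $(r+1)v-kL$ points have degree exactly $r$ is also correct, and it is the count the paper quotes in its discussion of related work.
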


\begin{proof}
If every degree were at least $r+1$, then $kL=\sum_p d_p\ge (r+1)v$.
\end{proof}

The hypothesis holds whenever $L$ is the Sch\"onheim value: if $\rho=C(v-1,k-1,t-1)$ and
$L=\lceil v\rho/k\rceil$, then $kL<v\rho+k\le v(\rho+1)$. So the reduction below with $r=\rho$
applies to every cell whose question is whether the Sch\"onheim bound is attained, and to some
cells above it, such as the \nSchonheimMas{} cells of Table~\ref{tab:main} in which $L$ is one more.

\paragraph{Links and the completeness formula.} Fix $n=v-1$, $m=k-1$, $s=t-1$ and $b=r$. The
\emph{type} of a row of an $n\times b$ matrix $Z$ is the set $R\subseteq\{0,\dots,b-1\}$ of
columns where it has a $1$. Up to row permutation $Z$ is determined by its type counts
$(n_R)_R$, and $Z$ is the incidence matrix of an $(n,m,s)$-covering with $b$ blocks if and only if
$\sum_R n_R=n$, every column $j$ satisfies $\sum_{R\ni j}n_R=m$, and any at most $s$ of the types
that occur have a common element (here $n\ge s$ is used to complete $j\le s$ types to $s$ rows). An
isomorphism class of such coverings is an orbit of count vectors under $S_b$ acting on the columns.

A \emph{class list} is a finite sequence $c_1,\dots,c_N$ of count vectors. For a count vector $c$
let $M_c$ be the $n\times b$ matrix with rows of the types given by $c$, sorted
$\lex$-nondecreasingly. For $\sigma\in S_b$ let $M_c\sigma$ be the matrix whose column $\sigma(j)$
is column $j$ of $M_c$, and let $\sort(X)$ sort the rows of $X$ $\lex$-nondecreasingly. Let
$\Img^\circ(c)$ be the set of matrices $\sort(M_c\sigma)$, $\sigma\in S_b$, whose columns are
$\lex$-nondecreasing.

\begin{definition}[Completeness formula]\label{def:list}
$\Flista(n,m,s,b\mid c_1,\dots,c_N)$ states that there is $Z\in\{0,1\}^{n\times b}$ such that
(1) every column has exactly $m$ ones, (2) every $s$ rows have a common column of ones, (3) the
rows are $\lex$-nondecreasing, (4) the columns are $\lex$-nondecreasing, and (5) $Z\notin
\Img^\circ(c_i)$ for $i=1,\dots,N$.
\end{definition}

\begin{theorem}\label{thm:list}
If $\Flista(n,m,s,b\mid c_1,\dots,c_N)$ is unsatisfiable, every $(n,m,s)$-covering with $b$ blocks
is isomorphic to some $M_{c_i}$.
\end{theorem}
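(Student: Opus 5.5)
We argue by contraposition. Suppose some $(n,m,s)$-covering with $b$ blocks is isomorphic to no $M_{c_i}$. From it we build an assignment satisfying $\Flista(n,m,s,b\mid c_1,\dots,c_N)$.

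Let $W$ be the incidence matrix of such a covering. Apply Lemma~\ref{lem:lexleast} with one row part and $J$ equal to all $b$ columns. This gives a matrix $Z=W^*$ in the orbit of $W$ whose rows and columns are both $\lex$-nondecreasing, so $Z$ is doubly lex. Permuting rows and columns preserves the column sums and the property that every $s$ rows have a common column of ones. Hence $Z$ satisfies conditions (1)--(4) of Definition~\ref{def:list}.

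It remains to check (5). Suppose, for contradiction, that $Z\in\Img^\circ(c_i)$ for some $i$. Then $Z=\sort(M_{c_i}\sigma)$ for some $\sigma\in S_b$. This matrix arises from $M_{c_i}$ by permuting columns (by $\sigma$) and then permuting rows (the sort), so it is isomorphic to $M_{c_i}$. Since $Z$ is isomorphic to $W$, so is $W$, contradicting the choice of $W$.

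So $Z$ satisfies (1)--(5), and $\Flista$ is satisfiable, which proves the contrapositive. The only point needing care is the bookkeeping that $\sort(M_c\sigma)$ lies in the isomorphism class of $M_c$, and that the lex-least representative is a genuine permutation of $W$, which holds because it lies in the orbit $GW$. No other result beyond Lemma~\ref{lem:lexleast} is needed. In particular, the restriction in $\Img^\circ$ to images with nondecreasing columns never excludes $Z$, because $Z$ itself has nondecreasing columns.
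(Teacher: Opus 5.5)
Your proof is correct and is the paper's argument in contrapositive form. Both pass to a doubly lex representative via Lemma~\ref{lem:lexleast} (one row part, $J$ all columns), note that it satisfies (1)--(4), and observe that being caught by (5) means it equals $\sort(M_{c_i}\sigma)$ and is therefore isomorphic to $M_{c_i}$.
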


\begin{proof}
By Lemma~\ref{lem:lexleast} a given covering is isomorphic to a doubly lex matrix $D$. This $D$
satisfies (1) to (4), so it violates (5): $D=\sort(M_{c_i}\sigma)$ for some $i$ and $\sigma$, and
$D$ is isomorphic to $M_{c_i}$.
\end{proof}

Nothing is assumed about how the list was produced: a missing class leaves the formula
satisfiable. Only images with ordered columns need to be blocked, at most $b!$ per class, because
(4) excludes the others. If in addition the $M_{c_i}$ are coverings and pairwise non-isomorphic, $N$
is the number of isomorphism classes of $(n,m,s)$-coverings with $b$ blocks. For the \nCeldas{} link
parameter sets that occur here we checked both conditions with a program that uses neither the
enumerator nor a solver, which gives the column ``classes'' of Table~\ref{tab:main}. The
covering numbers do not depend on that check.

\paragraph{Producing the lists.} Since Theorem~\ref{thm:list} accepts a list from any source, the
proof does not depend on the correctness of the enumerator, which matters only for speed. We
generate count vectors up to the action of $S_b$ by canonical augmentation~\cite{mckay1998}, adding
one block at a time: a child is kept only if the new block lies in the orbit, under the
automorphism group of the child, of the block that a canonical deletion removes, and if no sibling
already kept has the same canonical form. Partial objects are pruned by necessary conditions for
completion, such as the Sch\"onheim bound for the derived coverings at a point and at a pair of
points. Our first
program, a search over labelled supports, did not finish the links with six or seven blocks within
the time we gave it, and does not accept $s\ge 4$. Canonical augmentation lists both kinds and gives
the lists of the last \nCeldasNuevas{} cells of Table~\ref{tab:main}. The two programs give the
same lists on the first \nCeldasLarga{} cells and on $\nRejilla$ small parameter sets, and the new one
agrees with a brute force search on small instances. Both programs and these tests are in
version~2.0 of the data deposit.

\paragraph{The extension formula.} Let $c$ be a class and $f_1\lex\dots\lex f_{v-1}$ the rows
of $M_c$.

\begin{definition}[Extension formula]\label{def:ext}
$\Fclase(v,k,t,L,r\mid c)$ states that there is $X\in\{0,1\}^{v\times L}$ such that
(a) row $0$ has ones exactly in columns $0,\dots,r-1$,
(b) for $q=1,\dots,v-1$ the restriction of row $q$ to columns $0,\dots,r-1$ is $f_q$,
(c) every column has exactly $k$ ones,
(d) every $t$ rows have a common column of ones,
(e) every row has at least $r$ ones,
(f) for $1\le q\le v-2$ with $f_q=f_{q+1}$, row $q$ $\lex$ row $q+1$, and
(g) for $r\le j\le L-2$, column $j$ $\lex$ column $j+1$.
\end{definition}

\begin{theorem}[Link decomposition]\label{thm:reduction}
Let $2\le t\le k<v$, $L\ge r\ge 2$, and let $c_1,\dots,c_N$ be a class list for
$(v-1,k-1,t-1,r)$. Suppose that
\textup{(i)} there is no $(v-1,k-1,t-1)$-covering with $r-1$ blocks,
\textup{(ii)} $(r+1)v>kL$,
\textup{(iii)} every $(v-1,k-1,t-1)$-covering with $r$ blocks is isomorphic to some $M_{c_i}$, and
\textup{(iv)} $\Fclase(v,k,t,L,r\mid c_i)$ is unsatisfiable for $i=1,\dots,N$.
Then there is no $(v,k,t)$-covering with $L$ or fewer blocks.
\end{theorem}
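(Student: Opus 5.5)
We argue by contradiction: assume a $(v,k,t)$-covering with at most $L$ blocks exists and build a satisfying assignment of some $\Fclase(v,k,t,L,r\mid c_i)$. By Lemma~\ref{lem:monotone} we may take the covering to have exactly $L$ blocks, say with incidence matrix $Y$. By hypothesis (i) and Lemma~\ref{lem:degree} every point has degree at least $r$. By (ii) and Proposition~\ref{prop:mindeg} some point $p$ has degree exactly $r$. By Lemma~\ref{lem:link} its link is a $(v-1,k-1,t-1)$-covering with $r$ blocks. By (iii) the link is isomorphic to $M_{c_i}$ for some $i$.

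Next we normalize $Y$. We move $p$ to row $0$ and the $r$ blocks through $p$ to columns $0,\dots,r-1$. Deleting row $0$ and the remaining columns leaves the link. We permute the rows $1,\dots,v-1$ and the first $r$ columns so that this submatrix equals $M_{c_i}$ exactly, which is possible because it is isomorphic to $M_{c_i}$. These permutations keep row $0$ equal to ones in columns $0,\dots,r-1$ and zeros elsewhere, since $p$ lies in no other block. So (a) and (b) hold. Conditions (c), (d) and (e) hold because they are invariant under row and column permutations and hold for a covering all of whose degrees are at least $r$.

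Conditions (f) and (g) are the main step, and they come from Lemma~\ref{lem:lexleast}. Partition the rows into parts of consecutive indices: $\{0\}$, and the maximal runs of indices $q\ge1$ on which $f_q$ is constant. Since $M_{c_i}$ has sorted rows, equal $f_q$ occupy consecutive indices, so these runs are exactly the classes of equal $f_q$. Let $J=\{r,\dots,L-1\}$, and let $G$ permute rows within each part and columns within $J$. Every element of $G$ fixes the entries in columns $0,\dots,r-1$: a row swap within a part exchanges rows that agree there, and the column permutations do not touch those columns. So every matrix in the orbit $GX$ still satisfies (a)--(e). Take $X^*$ lex-least in $GX$. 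Part (i) of the lemma gives (f), because rows $q,q+1$ with $f_q=f_{q+1}$ lie in one part. Part (ii) gives (g). Thus $X^*$ satisfies $\Fclase(v,k,t,L,r\mid c_i)$, contradicting (iv).

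The care needed is in checking that the row parts coincide exactly with the runs of equal $f_q$ and that $G$ preserves (b). If the run structure failed, (f) might not be derivable. The sortedness of $M_{c_i}$ is what secures it. The hypothesis $r\ge 2$ and $L\ge r$ only ensure that the formulas are well formed.
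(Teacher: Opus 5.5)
Your proof is correct and follows the paper's argument step for step. You reduce to exactly $L$ blocks, obtain a point of degree $r$ from (i), (ii) and Proposition~\ref{prop:mindeg}, normalize its link to $M_{c_i}$, and apply Lemma~\ref{lem:lexleast} with row $0$ as its own part, the maximal runs of equal $f_q$ as the other parts, and $J=\{r,\dots,L-1\}$.

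Two small remarks. The sortedness of $M_{c_i}$ is not actually needed for (f), because rows $q$ and $q+1$ with $f_q=f_{q+1}$ lie in the same maximal run by definition. For (a), you should say explicitly, as the paper does, that row $0$ is zero on $J$, so permuting those columns keeps it as required.
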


\begin{proof}
By Lemma~\ref{lem:monotone} it is enough to exclude exactly $L$ blocks. Let $Y$ be a
$(v,k,t)$-covering with $L$ blocks. By (i) and Lemma~\ref{lem:degree} every degree is at least $r$,
and by (ii) and Proposition~\ref{prop:mindeg} some point $p$ has degree $r$. Renumber so that
$p=0$ and the blocks through $p$ are $0,\dots,r-1$. By Lemma~\ref{lem:link}, rows $1,\dots,v-1$
and columns $0,\dots,r-1$ form a $(v-1,k-1,t-1)$-covering with $r$ blocks, which by (iii) is isomorphic to $M_c$ for some $c=c_i$. Permute columns $0,\dots,r-1$ and
then rows $1,\dots,v-1$ so that the restriction of row $q$ to these columns is $f_q$. Row $0$ is
unaffected. The resulting $X$ satisfies (a) and (b) by construction, and (c), (d), (e) because
they are invariant under permutations of rows and columns.

Partition rows $1,\dots,v-1$ into maximal runs with equal $f_q$, put row $0$ in a part of its own,
and let $J=\{r,\dots,L-1\}$. Every element of the group $G$ of Lemma~\ref{lem:lexleast} preserves
(a) to (e): it fixes row $0$, which is zero on $J$, it permutes rows only among rows with the same
restriction to columns $0,\dots,r-1$, it does not move those columns, and (c), (d), (e) are
invariant. So the lex-least matrix of $GX$ satisfies (a) to (e), and by
Lemma~\ref{lem:lexleast} also (f) and (g). This contradicts (iv).
\end{proof}

Condition (iii) is established either by hand, as in Section~\ref{sec:example}, or by
Theorem~\ref{thm:list}, through the unsatisfiability of $\Flista$.

When the list is long, the formulas of (iv) can be merged. For a set $I$ of classes, let
$\Fclase(I)$ state that there are $X$ and selectors $(\varsigma_c)_{c\in I}$, exactly one of them
true, such that $X$ satisfies (a), (c), (d), (e), (g), and (b) and (f) for every $c$ whose
selector is true. A solution $X$ of $\Fclase(c)$, with $\varsigma_c$ true and the other selectors
false, solves $\Fclase(I)$. Conversely, in a solution of $\Fclase(I)$ exactly one selector
$\varsigma_c$ is true, and $X$ then satisfies (a) to (g) for $c$. So $\Fclase(I)$ is
unsatisfiable exactly when every $\Fclase(c)$, $c\in I$, is. That the propositional encodings of
$\Fclase(I)$ say this, that is, that an unselected class imposes nothing on $X$, is part of what has
to be trusted (Section~\ref{sec:trust}), and is shown clause by clause for both encodings, the same for all the cells, in the
long version~\cite{longversion}.

% ---------------------------------------------------------------------------------------------
\section{Links with four blocks, and \texorpdfstring{$C(39,24,3)=9$}{C(39,24,3)=9}}\label{sec:example}

In three of the cells the link has $r=4$ blocks and $t-1=2$, and steps (i) and (iii) can
be done by hand. For a set $R$ of blocks, $n_R$ is the number of points of type $R$, and $p_3(x)$
is the number of partitions of the integer $x\ge 0$ into at most three parts, with $p_3(0)=1$.

\begin{lemma}\label{lem:four}
Let $2n>3m$ and $d=5m-3n$. Then
\textup{(a)} there is no $(n,m,2)$-covering with $3$ blocks, and
\textup{(b)} an $(n,m,2)$-covering with $4$ blocks exists if and only if $d\ge 0$, and the number
of its isomorphism classes is then $\sum_{a}p_3(d-2a)$, over the integers $a$ with $0\le 2a\le d$.
\end{lemma}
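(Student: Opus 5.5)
The plan is to work entirely with type counts, as in Section~\ref{sec:method}. For $s=2$, a count vector $(n_R)$ describes an $(n,m,2)$-covering with $b$ blocks exactly when three things hold: every occurring type is nonempty, any two occurring types intersect, and every column sum equals $m$. Throughout I will use the observation that no column $j$ can be met by every type. Otherwise column $j$ would contain all $n$ points, so $n=m$, contradicting $n>m$, which follows from $2n>3m$. So the family $\mathcal{F}$ of occurring types is an intersecting family with empty common intersection. Every count argument then rests on the identity $\sum_R |R|\,n_R=bm$ together with $\sum_R n_R=n$.

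For (a), with $b=3$, I list the intersecting families on $\{0,1,2\}$ without a common element. A singleton $\{j\}$ would force every type to contain $j$, so all types have size at least $2$. Then $3m=\sum |R|n_R\ge 2n$, contradicting $2n>3m$.

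For (b), with $b=4$, I again exclude singletons. The $2$-subsets in $\mathcal{F}$ are pairwise intersecting, so they form a triangle inside some triple, a star $\{01,02,03\}$ or part of one, or at most one edge.
\begin{itemize}
\item If there are at most two $2$-sets, or a star of $2$-sets, the remaining types must meet all of them. I expect each such case either forces a common element or forces so many points of size at least $3$ that $4m=\sum|R|n_R\ge$ something contradicting $2n>3m$. This has to be checked case by case.
\item The surviving case, which I expect to be the only one, has its $2$-sets inside a triangle on some triple, say $\{0,1,2\}$, together with triples and possibly $\{0,1,2,3\}$.
\end{itemize}
In the surviving case I write the five column equations in the unknowns $n_{01},n_{02},n_{12}$, the four triple counts and $n_{0123}$. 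Eliminating with $\sum n_R=n$ and $\sum_j(\text{col}_j)=4m$ should leave free nonnegative parameters whose total is governed by $d=5m-3n$. For instance, column $3$ gives $n_{013}+n_{023}+n_{123}+n_{0123}=m$, and comparing this with the other columns yields linear relations whose slack is $d$. This will show that solutions exist if and only if $d\ge 0$. For existence, exhibit one explicit solution.

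The counting step is the main obstacle. The residual symmetry after fixing the distinguished point $3$ is $S_3$ on $\{0,1,2\}$, so isomorphism classes are $S_3$-orbits of solutions. I would try to parametrize each solution by an integer $a$ with $0\le 2a\le d$, which I guess counts the full-type points or a paired quantity. The remaining $d-2a$ units of slack would then be distributed over the three symmetric positions $0,1,2$. Modulo $S_3$ these are multisets of three nonnegative integers summing to $d-2a$, that is, partitions into at most three parts, giving $p_3(d-2a)$. The care needed is to make sure the $S_3$ action on the solution is exactly the permutation of these three slack coordinates, with no extra identifications. One must also check that the triple $\{0,1,2\}$ carrying the triangle is canonical, so that no two solutions with different distinguished columns are isomorphic. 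Summing over $a$ then gives the stated formula.
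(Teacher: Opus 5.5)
\textbf{There is a genuine gap: in (b) the two cases are reversed, and the case you plan to count in is empty.} Your part (a) is correct and is the paper's argument.

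The estimate you are missing is this. Let $a_i$ be the number of points in exactly $i$ blocks. Then $a_2+a_3+a_4=n$ and $2a_2+3a_3+4a_4=4m$ give $a_3+2a_4=4m-2n$. So a block containing no point of type size two has at most $a_3+a_4\le 4m-2n<m$ points, which is impossible.

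If the size-two types form a triangle on $\{0,1,2\}$, block $3$ is such a block. Indeed, your own column-$3$ equation $n_{013}+n_{023}+n_{123}+n_{0123}=m$ has left-hand side at most $4m-2n<m$. The same estimate excludes every configuration with at most two size-two types, since these meet at most three blocks.

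The full star $\{0,1\},\{0,2\},\{0,3\}$ forces neither a common element nor a counting contradiction, because the type $\{1,2,3\}$ meets all three pairs. It is the only configuration that survives. For example, the unique class for $(n,m)=(38,23)$, with $n_{\{0,1\}}=n_{\{0,2\}}=7$, $n_{\{0,1,2\}}=1$, $n_{\{0,3\}}=8$ and $n_{\{1,2,3\}}=15$, is a star. So your first bullet expects a contradiction that does not exist, and your second bullet is the case that is actually contradictory.

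The counting therefore has to be done in the star case. There the canonical block is the centre $0$, the only block common to all size-two types. Every isomorphism fixes it, and the residual $S_3$ acts on $\{1,2,3\}$.

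\begin{itemize}
\item Block $0$ contains exactly the points of type size two or four, together with the $x$ points whose type has size three and contains $0$. So $m=a_2+a_4+x=3n-4m+2a_4+x$, that is, $2a_4+x=d$. This is where $d\ge 0$ and the parameter $a=a_4$ come from, and it is the explicit relation your elimination step needs.
\item Let $x_j$ count the $x$ points whose type contains $j$. Then $n_{\{1,2,3\}}=a_3-x=n-m$ and $n_{\{0,j\}}=2m-n-a-x_j\ge 2n-3m+a>0$.
\item Hence every $a$ with $0\le 2a\le d$, and every distribution of $x=d-2a$ points over $\{0,1,2\},\{0,1,3\},\{0,2,3\}$, gives a covering.
\item $S_3$ permutes these three types arbitrarily, which yields $\sum_a p_3(d-2a)$.
\end{itemize}

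The shape you guessed for the count is right, but the distinguished block is the centre of the star, not the block outside a triangle.
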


\begin{proof}
Note that $m<2n/3<n$. A point in no block leaves the pairs through it uncovered, and if a point
lay only in block $j$, every other point would have to lie in block $j$, which has $m<n$ points. So every point lies in at least two
blocks, and the types that occur pairwise intersect. This gives (a), since three blocks have
$3m<2n$ incidences.

(b) Let $a_i$ be the number of points in exactly $i$ blocks. Counting points and incidences,
$a_2+a_3+a_4=n$ and $2a_2+3a_3+4a_4=4m$, so
\[
  a_3+2a_4=4m-2n, \qquad a_2=3n-4m+a_4 .
\]
The types of size two pairwise intersect, so either they all contain a common block or they are
the three pairs inside a set of three blocks: if two of them are $\{0,1\}$ and $\{0,2\}$, a third
avoiding $0$ must be $\{1,2\}$, and then every further one meets all three only if it is one of
them. In the second case the fourth block contains only
points of type size at least three, at most $a_3+a_4\le 4m-2n<m$ of them. So, after relabelling,
every type of size two contains block $0$. If $\{0,j\}$ did not occur for some $j\in\{1,2,3\}$,
block $j$ would again contain at most $4m-2n<m$ points. So $\{0,1\}$, $\{0,2\}$ and $\{0,3\}$ all
occur. Block $0$ is the only block common to them, so every automorphism and every isomorphism
between two such coverings maps block $0$ to block $0$. A type of size three either contains $0$ or
is $\{1,2,3\}$. Let $a=a_4$ and let $x$ be the number of points whose type has size three and
contains $0$. Block $0$ contains exactly the points of type size two or four and these $x$ points,
so
\[
  m=a_2+a+x=3n-4m+2a+x, \qquad\text{that is,}\qquad 2a+x=d .
\]
Hence $d\ge 0$. Let $x_j$ be the number of these $x$ points whose type contains $j\in\{1,2,3\}$, so that
$x_1+x_2+x_3=2x$.
Then $n_{\{1,2,3\}}=a_3-x=4m-2n-2a-x=n-m$, and the sum of column $j$ gives
$n_{\{0,j\}}=m-(n-m)-a-x_j=2m-n-a-x_j$. So the covering is determined by $a$ and by how the $x$
points are distributed over the types $\{0,1,2\}$, $\{0,1,3\}$, $\{0,2,3\}$.

Conversely, take $0\le a\le d/2$, put $x=d-2a$, distribute $x$ points over these three types in
any way, and define the other counts by the formulas above. Then $n_{\{0,j\}}\ge 2m-n-a-x=2n-3m+a>0$,
column $0$ has $\sum_j n_{\{0,j\}}+a+x=6m-3n-2a-x=m$ points, columns $1,2,3$ have $m$ points by
construction, there are $\sum_j n_{\{0,j\}}+x+a+(n-m)=n$ points, and the types used pairwise
intersect. Since isomorphisms fix block $0$ and induce any permutation of $\{1,2,3\}$, which
permutes the three types $\{0,i,j\}$ arbitrarily, the classes correspond to the pairs formed by
$a$ and a partition of $x$ into at most three parts.
\end{proof}

For $C(39,24,3)$ the link parameters are $(n,m)=(38,23)$, so $2n=76>69=3m$ and $d=1$. Hence
$r=C(38,23,2)=4$, and there is exactly one link class, with $a=0$, $x=1$:
\[
  n_{\{0,1\}}=7,\quad n_{\{0,2\}}=7,\quad n_{\{0,1,2\}}=1,\quad n_{\{0,3\}}=8,\quad n_{\{1,2,3\}}=15 .
\]
In the same way the link of $C(41,25,3)$, with $(n,m)=(40,24)$ and $d=0$, is unique, with
$n_{\{0,j\}}=8$ for $j=1,2,3$ and $n_{\{1,2,3\}}=16$, and the links of $C(37,23,3)$, with
$(n,m)=(36,22)$ and $d=2$, form $p_3(2)+p_3(0)=3$ classes: $a=0$ with the two points of type size
three containing $0$ of equal or of different types, and $a=1$ with $x=0$. In all three classes
$n_{\{1,2,3\}}=14$ and $n_{\{0,j\}}=8-a-x_j$. These agree with the classes column of
Table~\ref{tab:main} and with the certified lists. In all three cells $(r+1)v>kL$ (step (ii)), for
$C(39,24,3)$ because $5\cdot 39=195>192=24\cdot 8$. More generally, for $t=3$ and link
parameters with $3n/5\le m<2n/3$, Lemma~\ref{lem:four} gives $r=4$ and the complete list of links,
and, provided (ii) holds, only the extension step is left to decide.

\paragraph{The extension for $C(39,24,3)$.} The first four columns of the $39\times 8$ matrix are
now fixed, and the question is whether four further blocks of $24$ points cover every triple that
the link leaves uncovered. A triple of points other than $p$ is covered by a block through $p$
exactly when its three types share a block. So, for instance, every triple formed by a point of
type $\{0,1\}$, one of type $\{0,2\}$ and one of type $\{1,2,3\}$ must lie in one of the four free
blocks, and by (e) every point of type size two lies in at least two of them. We do not have a hand
proof that this is impossible. The extension formula is refuted by a SAT solver, like the formulas
of the other cells.

% ---------------------------------------------------------------------------------------------
\section{Application: \nCeldasPalabra{} cells}\label{sec:results}

Table~\ref{tab:main} lists the cells. In each, $r=C(v-1,k-1,t-1)$: step (i) excludes $r-1$
blocks, and the class lists are nonempty and consist of coverings with $r$ blocks.

\begin{table}[!htbp]
\centering\footnotesize
\caption{The cells. $L$ is refuted and $L+1$ is the size of the known covering, whose
construction is given as listed by the Covering Repository~\cite{coveringrepository}, with its
source or creator and its year. ``Classes'' is the number of isomorphism classes of
$(v-1,k-1,t-1)$-coverings with $r$ blocks.}\label{tab:main}
\vspace{4pt}
\setlength{\tabcolsep}{4pt}
\begin{tabular}{lrrlrrrl}
\toprule
cell & $L$ & $L+1$ & link & $r$ & $(r+1)v>kL$ & classes & construction of $L+1$ \\
\midrule
% generated by cifras_nota.py: the sixteen rows of tablas/resultados.tex, then the cells of cert-orb-2026-10-20
$C(18,11,3)$ & 9 & 10 & $(17,10,2)$ & 5 & $108 > 99$ & 92 & JCD article, 1996 \\
$C(20,12,3)$ & 9 & 10 & $(19,11,2)$ & 5 & $120 > 108$ & 49 & JCD article, 1996 \\
$C(23,14,3)$ & 9 & 10 & $(22,13,2)$ & 5 & $138 > 126$ & 358 & JCD article, 1996 \\
$C(25,15,3)$ & 9 & 10 & $(24,14,2)$ & 5 & $150 > 135$ & 215 & JCD article, 1996 \\
$C(27,16,3)$ & 9 & 10 & $(26,15,2)$ & 5 & $162 > 144$ & 101 & Kamal Fadlaoui, 2008 \\
$C(28,17,3)$ & 9 & 10 & $(27,16,2)$ & 5 & $168 > 153$ & 1{,}213 & LJCR, 2009 \\
$C(30,18,3)$ & 9 & 10 & $(29,17,2)$ & 5 & $180 > 162$ & 747 & LJCR, 2009 \\
$C(33,20,3)$ & 9 & 10 & $(32,19,2)$ & 5 & $198 > 180$ & 3{,}545 & LJCR, 2009 \\
$C(35,21,3)$ & 9 & 10 & $(34,20,2)$ & 5 & $210 > 189$ & 2{,}296 & LJCR, 2009 \\
$C(38,23,3)$ & 9 & 10 & $(37,22,2)$ & 5 & $228 > 207$ & 9{,}375 & LJCR, 2009 \\
$C(40,24,3)$ & 9 & 10 & $(39,23,2)$ & 5 & $240 > 216$ & 6{,}242 & LJCR, 2016 \\
$C(42,25,3)$ & 9 & 10 & $(41,24,2)$ & 5 & $252 > 225$ & 3{,}901 & 777ch, 2018 \\
$C(37,23,3)$ & 8 & 9 & $(36,22,2)$ & 4 & $185 > 184$ & 3 & LJCR, 2009 \\
$C(39,24,3)$ & 8 & 9 & $(38,23,2)$ & 4 & $195 > 192$ & 1 & LJCR, 2009 \\
$C(41,25,3)$ & 8 & 9 & $(40,24,2)$ & 4 & $205 > 200$ & 1 & 777ch, 2020 \\
$C(34,25,4)$ & 8 & 9 & $(33,24,3)$ & 5 & $204 > 200$ & 4 & LJCR, 2009 \\
$C(23,16,4)$ & 10 & 11 & $(22,15,3)$ & 6 & $161 > 160$ & 1 & JCD article, 1996 \\
$C(25,17,4)$ & 10 & 11 & $(24,16,3)$ & 6 & $175 > 170$ & 1 & Kinski and Genivaldo Pereira, 2008 \\
$C(29,20,4)$ & 10 & 11 & $(28,19,3)$ & 6 & $203 > 200$ & 1 & Alessandro Jurcovich, 2008 \\
$C(32,22,4)$ & 10 & 11 & $(31,21,3)$ & 6 & $224 > 220$ & 3 & Kinski and Genivaldo Pereira, 2008 \\
$C(35,24,4)$ & 10 & 11 & $(34,23,3)$ & 6 & $245 > 240$ & 1 & LJCR, 2009 \\
$C(36,25,4)$ & 10 & 11 & $(35,24,3)$ & 6 & $252 > 250$ & 4 & LJCR, 2009 \\
$C(22,17,5)$ & 10 & 11 & $(21,16,4)$ & 7 & $176 > 170$ & 1 & Vinicius Alves, 2008 \\
$C(30,23,5)$ & 10 & 11 & $(29,22,4)$ & 7 & $240 > 230$ & 1 & LJCR, 2009 \\
$C(31,24,5)$ & 10 & 11 & $(30,23,4)$ & 7 & $248 > 240$ & 6 & LJCR, 2009 \\
 
\bottomrule
\end{tabular}
\end{table}

\begin{proof}[Proof of Theorem~\ref{thm:main}]
For each cell, condition (ii) of Theorem~\ref{thm:reduction} is the inequality in
Table~\ref{tab:main}, and conditions (i), (iii) and (iv) follow from the unsatisfiability of an
explicit set of propositional formulas. For (i) the formula states that a doubly lex
$(v-1,k-1,t-1)$-covering with $r-1$ blocks exists, which by Lemma~\ref{lem:lexleast} is
satisfiable exactly when some such covering exists. For (iii) it is $\Flista$, with Theorem~\ref{thm:list}, and for the three cells with $r=4$
Lemma~\ref{lem:four} gives (i) and (iii) independently. For (iv) there is one formula $\Fclase(c)$ per
class, except in the $\nAgrupadas$ cells with $t=3$, $r=5$ and $v\ge\vAgrupadasMin$, where there is one $\Fclase(I)$ per group of up to $\tamGrupo$ consecutive classes. Each of these
$\nFormulasCod$ formulas was written by two encoders (Section~\ref{sec:trust}), and all
$\nFormulasTotal$ were refuted with proofs accepted by a verified checker. The formulas, the DRAT proofs, the class lists, the
logs and the versions of the tools are in the data deposit (see Data availability), and
Theorem~\ref{thm:reduction} gives $C(v,k,t)\ge L+1$.
\end{proof}

For $C(18,11,3)$ and $C(20,12,3)$ the earlier deposit~\cite{zenodo2cells} gives a second proof,
which excludes each of the possible degree sequences by a certified formula.

% ---------------------------------------------------------------------------------------------
\section{What has to be trusted}\label{sec:trust}

Each formula is refuted by kissat~\cite{kissat2024}, which writes a DRAT proof. drat-trim~\cite{drattrim2014}
checks it and converts it to LRAT~\cite{lrat2017}, and cake\_lpr~\cite{cakelpr2021,cakelpr2023},
whose correctness is proved in HOL4 down to its machine code, checks the LRAT proof against the
formula. So the solver and drat-trim need not be trusted, and neither do the enumerators of
links, because a missing class leaves $\Flista$ satisfiable. What remains is the following.
\begin{itemize}
  \item \emph{The argument of Section~\ref{sec:method}.}
  \item \emph{That the formulas say what step (i), Definitions~\ref{def:list} and~\ref{def:ext},
        and the merged formulas $\Fclase(I)$ say.} Two
        encoders were written separately and share only the numbering of the incidence variables
        and the class files. They differ in the encodings of cardinality (sequential
        counter~\cite{sinz2005} against totalizer~\cite{totalizer2003}), of coverage and of lex
        order, and both give unsatisfiable formulas for every case. On small instances the models
        of each encoding were compared exhaustively with the solutions computed without any
        formula.
  \item \emph{That the blocking clauses of $\Flista$ exclude only images of listed classes.}
        Blocking any other matrix could hide a link without making the formula satisfiable. A
        separate program reads every clause of $\Flista$ whose variables are all incidence
        variables. It requires each such clause to have exactly one literal on every incidence
        variable, so that it excludes exactly one matrix, apart from the first clauses of the lex
        chains of one encoder, which it recognises. It then checks that the excluded matrices are
        exactly the union of the sets $\Img^\circ(c_i)$, computed from the class file alone. This
        holds for all \nCeldas{} lists and both encoders.
  \item \emph{The bookkeeping:} a checker confirms for each cell and encoder that the certified
        formulas cover (i), (iii) and every class of the list in (iv), with the parameters of the
        cell, and regenerates every formula byte for byte from its encoder.
  \item \emph{The unverified parts around cake\_lpr:} the verified part is machine code produced
        by the CakeML compiler, and its small C wrapper for input and output, the C compiler and
        linker that build the executable, and the operating system are outside the proof.
\end{itemize}
Positive controls test the encodings at full size. For each cell, the extension formula with
$L+1$ blocks, written for the link of a point of minimum degree in the known covering, is
satisfiable in both encodings, and each model is a covering. Removing one class from a list made
$\Flista$ satisfiable, with a model isomorphic to the removed class, in every run where we tried
it ($\nListasControl$ of the \nCeldas{} lists). These controls are evidence against an
encoding error, not a proof that there is none. The encodings, the tests and the cost of the
computation are described in the long version~\cite{longversion}.

% ---------------------------------------------------------------------------------------------
\section{Scope}\label{sec:scope}

Condition (ii) is rarely what fails. Of the $\nHuecoUno$ cells with a gap of one in the frozen La
Jolla table~\cite{ljcr2026} (version~1.2), with $L$ the lower bound and $r=C(v-1,k-1,t-1)$ as
recorded there when that value is exact, $\nAplica$ satisfy $(r+1)v>kL$. In $\nNoExacto$ more the value of $r$ is not recorded as exact, and in $\nFalla$ the inequality fails because the recorded lower bound lies well above the Sch\"onheim value. In the $\nAplica$ cells the inequality alone settles
nothing: (i), (iii) and (iv) have to be established cell by cell.

Two quantities matter in practice: the number of links and the fraction $r/L$ of the columns that
a link fixes. Among the gap-one cells with $t=3$ and $r=5$ that satisfy (ii)
there are $\nTtresRcinco$, and these are the twelve cells of Table~\ref{tab:main} with $L=9$.
Among the $\nRseisNota$ gap-one cells with $t\in\{3,4\}$ and $r=6$, where $r/L$ lies between
$\rLRseisMin$ and $\rLRseisMax$, canonical augmentation lists the links of $\nRseisEnum$ within ten minutes
each on one thread, with $\rseisClasesMin$ to $\rseisClasesMax$ classes, and $\nRseisTabla$ of these cells are in
Table~\ref{tab:main}. With many classes the cost moves to the formulas: $\Flista$ excludes up to
$r!$ images of each class, and there is one extension formula per class or group.\frasePendientes{} For $t=2$ the reduction is unlikely to help in its present form: all
$\nTdos$ applicable gap-one cells have $r/L<0.3$, so one link leaves most of the matrix free. At
the other end, when $kL=vr$ every point has degree $r$, as in Krug's proof~\cite{krug2026}, so every link, not only
one, is a covering with $r$ blocks.

\section*{Use of AI tools}

From August to October 2026 the author used Claude (Opus 5.5 and Fable 5.1, Anthropic) and GPT-6
Astra (OpenAI) to write the encoders, the link enumerators, the checker and the scripts, to run and
log the computations, to review the mathematics, the code and the writing, and to draft this text
and its long version. The models also took part in finding the arguments, among them the
classification of Lemma~\ref{lem:four}. The author directed the work, checked every proof in
Sections~\ref{sec:prelim} to~\ref{sec:example} step by step and the numbers against the run
records, and is responsible for the content. The values rest on the arguments of
Section~\ref{sec:method} and on cake\_lpr checking the proofs of formulas produced by two
independently written encoders, which the models also helped to write. Section~\ref{sec:trust}
lists what has to be trusted.

\section*{Data availability}

The code (both encoders, the two link enumerators, the checker and the tests), the class lists, the
logs of every formula with the SHA-256 of each CNF, DRAT and LRAT file, the explicit coverings with
$L+1$ blocks and the logs of the controls are deposited on Zenodo,
\zdoi{10.5281/zenodo.23434567}, a DOI that resolves to the latest version of the record. The LRAT
proofs are not deposited and are regenerated from the DRAT proofs with drat-trim. The DRAT proofs
and CNF files of the first \nCeldasLarga{} cells ($\depositoGib$\,GiB) are in four records linked
from the main one, \zdoi{10.5281/zenodo.23434574}, \zdoi{10.5281/zenodo.23434576},
\zdoi{10.5281/zenodo.23434578} and \zdoi{10.5281/zenodo.23441756}, and those of the other
\nCeldasNuevas{} ($\gibNuevas$\,GiB) are in version~2.0 of the main record. The proofs of the degree split of $C(18,11,3)$ and
$C(20,12,3)$ are in~\cite{zenodo2cells}. All these records are released under the CC BY 4.0
licence.

\bibliographystyle{plain}
\bibliography{refs}

\end{document}